
\documentclass[final,5p,times,twocolumn]{elsarticle}




\usepackage{amsfonts,amsthm,amssymb}
\usepackage{amsmath}
\usepackage{amsfonts}
\usepackage{dsfont}
\allowdisplaybreaks
\usepackage{xcolor}
\usepackage{tikz}
\usepackage{bm}
\usepackage{multirow}
\usepackage{booktabs}
\usepackage{epstopdf}
\usepackage{array}
\usepackage{mathtools}
\mathtoolsset{showonlyrefs=true}
  \def \ind{\mathds{1}}
  \newcommand{\dd}{\mathrm{d}}

\usepackage[numbers]{natbib}
\usepackage{floatrow}
\floatsetup[table]{capposition=top}
\usepackage{hyperref} 
\hypersetup{
	colorlinks   = true,
	citecolor    = blue,
	linkcolor    = red,
	urlcolor     = magenta
}

\newtheorem{theorem}{Theorem}[section]

\newtheorem{proposition}[theorem]{Proposition}

\newtheorem{lemma}[theorem]{Lemma}

\newtheorem{definition}[theorem]{Definition}
\newtheorem{corollary}[theorem]{Corollary}

\usepackage{bbm}


\journal{Operations Research Letters}

\begin{document}

\begin{frontmatter}



\title{Static Replication of Impermanent Loss for  Concentrated Liquidity Provision in   Decentralised Markets}


\author[JD]{Jun Deng}
\author[HZ]{Hua  Zong}
\author[YW]{Yun Wang}

\address[JD]{Corresponding author.   School of Finance,
	University of International Business and Economics, Beijing, China. Email: jundeng@uibe.edu.cn.}

\address[HZ]{ School of Finance,
	University of International Business and Economics, Beijing, China. Email: 202000210049@uibe.edu.cn}

\address[YW]{ School of Finance,
	University of International Business and Economics, Beijing, China. Email: ywang@uibe.edu.cn}

\begin{abstract}
This article analytically characterizes the impermanent loss of concentrated liquidity provision for   automatic market makers in decentralised markets such as Uniswap. We propose two static replication formulas  for    the  impermanent loss   by a combination of European calls or puts with strike prices supported on the liquidity provision price interval. It facilitates liquidity providers to hedge  impermanent loss by trading crypto options in more liquid centralised  exchanges such as Deribit.      Numerical examples illustrate the astonishing   accuracy of the static replication. 
\end{abstract}

\begin{keyword}
Decentralised Market \sep Automatic Market Making\sep Uniswap \sep Impermanent Loss



\end{keyword}

\end{frontmatter}


\section{Introduction}

Decentralised exchanges (DEXs)  like Uniswap and Sushiswap facilitate  traders to swap  tokens in the listed liquidity pools by the architecture of automatic market making (AMM) without the intermediary    centralised institutions.  These exchanges  utilize    open-source protocols for providing liquidity and trading crypto tokens  and  all trades are recorded   on Ethereum blockchain.  The protocol is non-upgradable and designed to be censorship resistant without know-your-custom rule (KYC).   Instead of using limited order book as in traditional centralised financial markets that would induce extreme costly   gas fee by miners to verify transactions    on blockchain, most DEXs such as Uniswap and Sushiswap use constant product function automated market making protocol\footnote{Gas fee is  paid to miners for  validating   transactions on the Ethereum blockchain to compensate their computational resources.   
	Gas fee is often denominated in `Gwei', which is a unit of measure for the Ethereum's native currency, Ether (ETH) (1   Gwei = $10^{-9}$ ETH).}. In this paper, we would treat the dominant decentralised exchange, Uniswap that initiates  its   first version  protocol     in November 2018.  The Uniswap market lists over 400 tokens and 900 token pairs. The daily average trading volume exceeds   2 billion USD  in 2022 where the most traded pair USDC/ETH consists of  almost $50\%$  share of   total volume, followed by USDT/ETH around 10\% volume.

The constant product function automated market making protocol on Uniswap (see the v2 whitepaper \citet{adams2020uniswap}) allows traders to add, remove and swap tokens in the pool that could host any pair of tokens $(\mathtt{T_X}, \mathtt{T_Y})$.   The token $\mathtt{T_Y}$ (such as stablecoin USDC) is treated as the unit of account, i.e. the numeraire,  and    $\mathtt{T_X}$ is taken as the more volatile token  such as ETH which is the native cryptocurrency on the Ethereum.  For a pool with reserves $(X,Y)$ of tokens $(\mathtt{T_X}, \mathtt{T_Y})$, to endogenously determine the pool price,  the   pool tracks  the constant product `bonding' reserve curve $X\cdot Y = L^2$. The constant $L$, called the liquidity,  is set at the inception  and remains unchanged across trades.

To exchange $\Delta X$ amount of token $\mathtt{T_X}$ for $\Delta Y$ quantity of token $\mathtt{T_Y}$, the trader must stay on the  `bonding' curve, i.e.,    $
\left( X + \Delta X\right)\cdot \left( Y - \Delta Y\right) = L^2 = XY.
$
This  means  the trader could deposit $\Delta X$ number of  token $\mathtt{T_X}$ to swap $\Delta Y =  Y\cdot \Delta X/ (X+\Delta X)$ token $\mathtt{T_Y}$ out from the pool.   The product of new reserves  $X^{' }= X+\Delta X$ and $Y^{'}= Y -\Delta Y$ remains the constant   $X^{'}\cdot Y^{'}= L^2$.	To compensate the risk, such as impermanent loss,    taken by liquidity providers, the protocol would charge a swap fee   $\gamma \cdot \Delta X$ in terms of token sending in.  The    fee rate $\gamma$ is initiated and unchangeable when the pool is created, e.g.  $0.1\%, 0.05\%, 0.03\%$ and $0.01\%$ on Uniswap.  The constant product `bonding' curve endogenously yields a relative price in the pool.  In this article,  the \textit{pool price} of token $\mathtt{T_X}$ is  denominated in   token  $\mathtt{T_Y}$ as  $P = Y/X$.\footnote{The Uniswap protocol always tracks  liquidity $L$ and price $P$ instead of reserves $X$ and $Y$.}  In such a way, the reserves are  
$
X = L/\sqrt{P}$ and $  Y = L\sqrt{P}.$
A liquidity provider could add (remove) $\Delta L$ liquidity to the existing pool by depositing (redeeming) $\Delta X = \Delta L/ \sqrt{P}$ number of token $\mathtt{T_X}$ and $\Delta Y = \Delta L \sqrt{P}$ of token $\mathtt{T_Y}$. The liquidity provision is supplied at the current price $P$ and does not alter the pool price. 

The first and second protocol versions   of Uniswap are criticized by  low capital efficiency where liquidity provisions  are dispersed on the price range $(0,\infty)$  and only a small fraction  of total reserves is utilized during swap.  Each liquidity provider  only earns a small fraction fee proportional to her share in the pool. To promote capital efficiency through elimination of unused collateral, the  Uniswap v3 protocol was  launched on the Ethereum mainnet on May 2021 with the groundbreaking  innovative feature of  concentrated liquidity provision where liquidity providers could specify the price interval $[{\mathbb  P}_l, {\mathbb  P}_u]$ that they are willing to supply liquidity, see the whitepaper \citet{adams2021uniswap}. This resembles limit order instead of market order in previous less-efficient v2 protocol.  The `bonding' curve is shifted as $\left( X + L /\sqrt{{\mathbb  P}_l}\right)\cdot  \left( Y +  {L}{\sqrt{{\mathbb  P}_u}}\right) = L^2.$ The details are given  in the next section. 

The liquidity provider is  exposed to impermanent loss that is only realized until  depleting liquidity and withdrawing   the tokens  from the   pool.   This loss is typically calculated as the   difference of   her supplied   token pair value  in the liquidity pool  and  the value of simply holding  the tokens statically when entering the pool. Since traders always exchange less valuable token for more valuable one, liquidity providers always suffer impermanent loss (IL) that could be significant.   \citet{loesch2021impermanent} estimate     from May to Sep. 2021 the total IL  is roughly  \$260.1 million USD and 49.5\% of liquidity providers with negative returns in Uniswap v3 market.  

In this paper, we propose a static hedge strategy for liquidity providers using standard European options to eliminate the impact of   IL.  First, we   show that liquidity providers   equivalently  long and short different call and put options by liquidity provision and  explicitly characterize the  impermanent loss  as a combination of several calls and puts with different strike prices and underlying driving processes.  Second, we propose two static   replication formulas that facilitate  liquidity providers to hedge the impermanent loss risk by taking long positions of  standard European call or put options in these centralised options market such as Deribit\footnote{Deribit is the largest centralised Bitcoin and ETH options  exchanges. More information could be found at \href{www.deribit.com}{www.deribit.com}.}.  At last, we numerically verify  the static replication accuracy   that would reduce  liquidity providers' impermanent loss risk tremendously. 

We contribute to the   continually growing body of literature on decentralised exchanges in several ways. For classical market making, we refer to the seminal works of \cite{amihud1980dealership}, \cite{o1986microeconomics} and \cite{korajczyk2019high}, to name a few.  \citet{angeris2019analysis} analyze no-arbitrage boundaries and price stability in Uniswap market.    \cite{malamud2017decentralized}  show that the equilibrium utility in a decentralized market can be strictly higher   than in a centralized market and \citet{lehar2021decentralized}   propose an equilibrium model and give conditions under which the automatic market making (AMM)  dominates a limit order market and  \citet{capponi2021adoption} study the market microstructure of AMM.   Another strand is to address the optimal liquidity provision in Uniswap market, see \cite{aoyagi2020liquidity}, \cite{aigner2021uniswap} and \cite{neuder2021strategic}. 
However, they only focus on Uniswap v2 protocol either without the feature of concentrated liquidity provision or   incorporating  the impermanent loss. One exception is \citet{loesch2021impermanent} that  empirically calculate the impermanent loss (IL) in Uniswap v3 market using the on-chain data.  Our paper is more related to \citet{clark2020replicating} that studies static replication in Uniswap v2 market. To the best of our knowledge, we are the first to characterize the option-like structure of  IL that is both suffered from delta, vega and gamma exposures in Uniswap v3 market. Second, from methodological perspective, we propose  a static option replication formula for squared-root price process that is further tailored to develop our replication formulas for the impermanent loss.  It facilitates liquidity providers to hedge  permanent loss by trading crypto options in more liquid centralised  exchanges such as Deribit.

The rest of the paper is structured as follows. Section \ref{subsec_liquidity_prov} introduces the concentrated liquidity provision protocol. In Section \ref{subsec_perm_loss}, we characterize the impermanent loss and present the static replication formulas.  Section \ref{sec_empirical} demonstrates the static hedge replication accuracy and Section \ref{sec_conclusion} concludes.

\section{Concentrated Liquidity Provision  }\label{subsec_liquidity_prov}

The constant product function protocol of Uniswap   v2 facilitates  token swapers and liquidity providers to interact with the pool automatically without any financial intermediaries, although suffering   low capital efficiency.  The     Uniswap v3, launched on the Ethereum mainnet on May 2021,    has popularized the innovative feature   of concentrated  liquidity provision. This  increases the capital efficiency tremendously, up to 4000x relative to v2, at the sacrifice of  higher leverage and impermanent loss.   

When supplying  liquidity, the liquidity  provider specifies a lower price ${\mathbb  P}_l$ and a upper price ${\mathbb  P}_u$ and she  
earns    transaction fees  paid by swapers whenever the price remains in the interval   $[{\mathbb  P}_l, {\mathbb  P}_u]$. When the price moves out of the range $[{\mathbb  P}_l, {\mathbb  P}_u]$, the  position is inactive and she no longer earns any fee. Until the price re-enters   into the interval, her position is activated again.    Specifically,  the `bonding' curve  of tokens $\mathtt{T_X}$ and $\mathtt{T_Y}$ satisfy the shifted constant product function
\begin{align}
	\left( X + L /\sqrt{{\mathbb  P}_l}\right)\cdot  \left( Y +  {L}{\sqrt{{\mathbb  P}_u}}\right) = L^2.
\end{align}
The amount $L /\sqrt{{\mathbb  P}_l}$ and ${L}{\sqrt{{\mathbb  P}_u}}$ are the \textit{virtual} token reserves which are not tradable. 
Depending on the location of supported price interval $[{\mathbb  P}_l, {\mathbb  P}_u]$ relative to the current price $ {P_0}$,   to supply $\Delta L$ liquidity,  the liquidity provider's deposits $\Delta  X$ and $\Delta  Y$ of  tokens  $\mathtt{T_X}$ and $\mathtt{T_Y}$ are
\begin{align}\label{eq_dx_dy_v3}
	\Delta X &= \left\{  \begin{array}{cl}
		\Delta L \left(\frac{1}{\sqrt{{\mathbb  P}_l}}-  \frac{1}{\sqrt{{\mathbb  P}_u}}\right), & \mbox{if}\quad   P_0<{\mathbb  P}_l,\\
		\Delta L \left(\frac{1}{\sqrt{P_0}}-  \frac{1}{\sqrt{{\mathbb  P}_u}}\right), & \mbox{if}\quad   P_0 \in [{\mathbb  P}_l,  {\mathbb  P}_u],\\
		0, & \mbox{if}\quad     P_0> {\mathbb  P}_u.
	\end{array}\right. \\
	\Delta Y &= \left\{  \begin{array}{cl}
		0, & \mbox{if}\quad   P_0<{\mathbb  P}_l,\\
		\Delta L (\sqrt{P_0} - \sqrt{{\mathbb  P}_l}), & \mbox{if}\quad   P_0 \in [{\mathbb  P}_l,  {\mathbb  P}_u],\\
		\Delta L (\sqrt{{\mathbb  P}_u} - \sqrt{{\mathbb  P}_l}), & \mbox{if}\quad     P_0> {\mathbb  P}_u.
	\end{array}\right.
\end{align}
The deposits $\Delta X$ and $\Delta Y$ could be regarded as the trading volume that are needed to move price out   the supported interval $[{\mathbb  P}_l, {\mathbb  P}_u]$ from the lower boundary ${\mathbb  P}_l$ and the upper boundary  ${\mathbb  P}_u$.  The liquidity $\Delta L$ is not a tradable asset, only a synonym of token reserves $\Delta X$ and $\Delta Y$.
Several important facts follows from \eqref{eq_dx_dy_v3}.
\begin{enumerate}
	\item[(i)]   The liquidity $\Delta L$ supplied on the interval $[{\mathbb  P}_l, {\mathbb  P}_u]$ could be treated as uniformly distributed. We only illustrate one case when ${\mathbb  P}_l\leq   P_0 \leq {\mathbb  P}_u$, where $P_0$ is the current pool price. That means if we artificially split  $[{\mathbb  P}_l, {\mathbb  P}_u]$ into two sub-intervals  $[{\mathbb  P}_l,  P_0 ]$ and $[P_0, {\mathbb  P}_u]$, the liquidity  on two intervals  are both equal to $\Delta L$. By the supply equations in \eqref{eq_dx_dy_v3}, we could reformulate it as 
	\begin{align}
		\Delta X &= \underbrace{ 0}_{\Delta X^l  \ \text{deposit   on} \ [{\mathbb  P}_l,  P_0]} + \underbrace{\Delta L \left(\frac{1}{\sqrt{{P_0}}}-  \frac{1}{\sqrt{{\mathbb  P}_u}}\right)}_{\Delta X^r \ \text{deposit   on} \  [P_0, {\mathbb  P}_u]},\\
		\Delta Y &= \underbrace{\Delta L (\sqrt{P_0} - \sqrt{{\mathbb  P}_l})}_{\Delta Y^l \ \text{deposit  on} \ [{\mathbb  P}_l,  P_0]}
		+ \underbrace{0}_{\Delta Y^r\ \text{deposit   on }\  [P_0, {\mathbb  P}_u]}.
	\end{align}
	It clearly shows that the   provider supplies liquidity $\Delta L$ on the left interval $[{\mathbb  P}_l,  P_0]$ with reserves $(\Delta X^l, \Delta Y^l)$ and on the right interval $[P_0, {\mathbb  P}_u]$  with reserves $(\Delta X^r, \Delta Y^r)$ simultaneously. 
	\item[(ii)] The discussion above motivates us   considering liquidity provision only from two sides of the current price $P_0$ that means we treat the liquidity provision on ${\mathbb  P}_l\leq  {P_0} \leq {\mathbb  P}_u$ as two independent and disjoint price bins $[{\mathbb  P}_l,  {P_0}]$ and $[{P_0}, {\mathbb  P}_u]$. In doing so, on the lower price bin $[{\mathbb  P}_l, {P_0}]$  she only supplies  token $\mathtt{T_Y}$ with the amount of $\Delta L (\sqrt{{P_{0}}} - \sqrt{{\mathbb  P}_l})$, in the meanwhile, she only deposits  token $\mathtt{T_X}$ with the amount of  $\Delta L \left(\frac{1}{\sqrt{{P_0}}}-  \frac{1}{\sqrt{{\mathbb  P}_u}}\right)$ in the upper price bin. The two price intervals $[{\mathbb  P}_l,  {P_0}]$ and $[{P_0},  {\mathbb  P}_u]$ resembles the bid-ask  prices of the traditional limit-order-book. This would greatly simply the analysis of   impermanent loss below. 
\end{enumerate}

\section{Impermanent Loss of Liquidity Provision}\label{subsec_perm_loss}
The  impermanent loss     is the possible loss from liquidity provision, compared to the static strategy where the liquidity provider holds the corresponding tokens in the pool unchanged.  Due to the change of token pool price, once the provider closes her position and exits the pool, the  impermanent loss would be realized.  Following \cite{aigner2021uniswap},  \cite{heimbach2022risks}, and  \cite{loesch2021impermanent},  we define the     impermanent loss (IL)  as follows.
\begin{definition}\label{definition_imperm_loss}
	For a  liquidity provision  with  deposits   $X_{0}$ and $Y_{0}$  of tokens  $\mathtt{T_X}$ and   $\mathtt{T_Y}$ at initial time $0$, 
	the realized impermanent loss (IL)  at time $t$ when   removing  the liquidity  is the capital loss   if she holds token pair statically   at initial time $0$ instead.  Specifically, the   impermanent loss (IL) is calculated as 
	\begin{align}
		\text{IL} = Y_{t}  - Y_{0}  + (X_{t}  - X_{0}) \cdot P_{t}.
	\end{align}
	Here,  $X_t$ and $Y_t$ are the quantities withdrawn at time $t$  and $P_t$ is the price of token $\mathtt{T_X}$.
\end{definition}
The  impermanent loss is not defined as the difference between  `exit' value and `entry' value as usual. Here, we take the same perspective of \cite{aigner2021uniswap},  \cite{heimbach2022risks}, and  \cite{loesch2021impermanent} and industry practice that liquidity providers treat  impermanent loss as the cost of buying their initial liquidity deposits back when exiting the pool. { A recent work on other form of arbitrage and losses in decentralised exchanges (the so-called miner extractable value) is studied in \citet{capponi2022evolution}.}  Since    token $\mathtt{T_Y}$ is usual some stable-coin (such as USDC and USDT) with value pegged to 1 USD, we always consider the realized impermanent loss (IL) in terms of token $\mathtt{T_Y}$.

From the analysis in Section \ref{subsec_liquidity_prov}, we could treat each liquidity provision  separately from two sides of the current pool price $ {P_0}$.  Without loss of generality,  we assume the liquidity $\Delta L$ is supplied on the right side price  interval   $[{\mathbb  P}_l, {\mathbb  P}_u]$ where ${P_0}\leq {\mathbb  P}_l\leq {\mathbb  P}_u$ that resembles the ask prices in limit order book. From \eqref{eq_dx_dy_v3}, the number of tokens   required to establish the position is 
\begin{align}
	\Delta Y_0 = 0, \qquad   \Delta X_0 = \Delta L \left(\frac{1}{\sqrt{{\mathbb  P}_l}}-  \frac{1}{\sqrt{{\mathbb  P}_u}}\right).
\end{align}
Now, we track the token holdings when the provider closes her position where the price changes from $ {P_0}$ to $P_t$ at exiting time $t$.  We distinguish three possible locations of price $P_{t}$. 
\begin{enumerate}
	\item[(i)] $P_{t}\in [{\mathbb  P}_l, {\mathbb  P}_u]:$ When the price increases and moves into the price interval $[{\mathbb  P}_l, {\mathbb  P}_u]$, the initial reserve $\Delta X_0$ is partially converted to \textit{less valuable}  token $\mathtt{T_Y}$ that would induce   loss to the liquidity provider.    At time $t$, when exiting, the quantities  of tokens $\mathtt{T_Y}$ and $\mathtt{T_X}$ can be retrieved from the pool are    
	\begin{align}
		\Delta Y_{t}= \Delta L(\sqrt{P_{t}} - \sqrt{{\mathbb  P}_l}), \quad 
		\Delta X_{t} = \Delta L\left( \frac{1}{\sqrt{P_{t}}} - \frac{1}{\sqrt{{\mathbb  P}_u}}\right).
	\end{align}
	The amount of $\Delta X_0 - \Delta X_t$ of token $\mathtt{T_X}$ is converted to token $\mathtt{T_Y}$ that would have been sell more   if she does not provide liquidity. 
	Therefore, the IL denominated in token  $\mathtt{T_Y}$ is  
	\begin{align}
		\text{IL}^{(1)} &=  \Delta L(\sqrt{P_{t}} - \sqrt{{\mathbb  P}_l})
		+  \Delta L\left( \frac{1}{\sqrt{P_{t}}} - \frac{1}{\sqrt{{\mathbb  P}_l}}\right) P_{t}  \\
		&=  \Delta L\left(2  \sqrt{P_{t}} 
		-  \frac{P_{t}}{\sqrt{{\mathbb  P}_l}}
		-  \sqrt{{\mathbb  P}_l}    
		\right)\leq   0.
	\end{align}
	
	With the escalating of token $\mathtt{T_X}$'s price, the provider is consistently and gradually selling more  valuable token $\mathtt{T_X}$ for token $\mathtt{T_Y}$. Therefore, she suffers a loss.
	\item[(ii)] $P_{t} \geq {\mathbb  P}_u:$ When the price crosses above the upper price ${\mathbb  P}_u$, all token reserve $\Delta X_0$ are converted to token $\mathtt{T_Y}$.  The amount of tokens $\mathtt{T_Y}$ and $\mathtt{T_X}$ can be retrieved from the pool at time $t$ are    
	\begin{align}
		\Delta Y_{t} = \Delta L(\sqrt{{\mathbb  P}_u} - \sqrt{{\mathbb  P}_l}), \quad 
		\Delta X_{t} = 0.
	\end{align}
	Therefore, the  IL  is
	\begin{align}
		\text{IL}^{(2)} &=  \Delta L\left(  \sqrt{{\mathbb  P}_u} - \sqrt{{\mathbb  P}_l} -
		\left(\frac{1}{\sqrt{{\mathbb  P}_l}} - \frac{1}{\sqrt{{\mathbb  P}_u}}\right)  {P_t} \right)  
		\leq  0
	\end{align}
	In the meanwhile, the average selling price of token $\mathtt{T_X}$ is $\Delta Y_t/\Delta X_0 = \sqrt{{\mathbb  P}_u {\mathbb  P}_l} $, lower than the market price $P_t$.
	\item[(iii)] $P_{t} \leq  {\mathbb  P}_l:$ When the price stays below the lower boundary ${\mathbb  P}_l$,   the quantity of token $\mathtt{T_X}$ is unchanged and  the IL  is  0.
\end{enumerate}

Taking together, the aggregated impermanent loss from the right side price interval $\text{IL}^{\mathtt{R}}$   is 
\begin{align}\label{eq_losP_for_rightside}
	\frac{\text{IL}^{\mathtt{R}}}{\Delta L} &=  
	\left(2  \sqrt{P_{t}} 
	-  \frac{P_{t}}{\sqrt{{\mathbb  P}_l}}
	-  \sqrt{{\mathbb  P}_l}    
	\right) \ind_{\{{\mathbb  P}_l\leq P_{t} \leq {\mathbb  P}_u\}}  \\
	&+  \left(  \sqrt{{\mathbb  P}_u} - \sqrt{{\mathbb  P}_l} -
	\left(\frac{1}{\sqrt{{\mathbb  P}_l}} - \frac{1}{\sqrt{{\mathbb  P}_u}}\right)  {P_t} \right)  
	\ind_{  \{P_{t} \geq {\mathbb  P}_u\}}. 
\end{align}
Similar argument to the liquidity provision from the left side price interval  $[\mathbb{S}_l, \mathbb{S}_u]$ (i.e. $\mathbb{S}_l\leq \mathbb{S}_u\leq P_0$) leads to 
\begin{align}\label{eq_losP_for_leftside}
	\frac{\text{IL}^{\mathtt{L}}}{\Delta L} &=  
	\left(2  \sqrt{P_{t}} 
	-  \frac{P_{t}}{\sqrt{{\mathbb  S}_u}}
	-  \sqrt{{\mathbb  S}_u}    
	\right) \ind_{\{{\mathbb  S}_l\leq P_{t} \leq {\mathbb  S}_u\}}   \\
	& +  \left(  \left(\frac{1}{\sqrt{{\mathbb  S}_l}} - \frac{1}{\sqrt{{\mathbb  S}_u}}\right)  {P_t} - \sqrt{{\mathbb  S}_u}+ \sqrt{{\mathbb  S}_l}  
	\right)  
	\ind_{  \{P_{t} \leq  {\mathbb  S}_l\}}. 
\end{align}
We call the ratio $\text{IL} /\Delta L$ as \textit{unit impermanent loss per liquidity} (UIL). Rearranging  \eqref{eq_losP_for_rightside} and \eqref{eq_losP_for_leftside} gives the following representation. \footnote{  If we track the impermanent loss by trading volume and there are multiple liquidity providers with possible overlap provision price intervals, we could split them to non-overlap ones and investigate impermanent loss separately. We thank  the referee for pointing out this.  In this paper, we take the perspective of tracing price changes that simplifies the exposure. }

\begin{proposition}\label{proposition_UIL_representation}
	The impermanent loss per liquidity (UIL) is characterized as a combination of short and long positions in different call  options as following. 
	\begin{align}\label{eq_UIL_representation}
		\rm{UIL}^{\mathtt{R}} &= 		2 \left(\sqrt{P_t} - \sqrt{{\mathbb  P}_l}\right)^{+} 
		- 2 \left(\sqrt{P_t} - \sqrt{{\mathbb  P}_u}\right)^{+}
		\\
		&- \frac{1}{\sqrt{{\mathbb  P}_l}}  	 \left( {P_t} -  {{\mathbb  P}_l}\right)^{+}
		+ \frac{1}{\sqrt{{\mathbb  P}_u}}  	 \left( {P_t} -  {{\mathbb  P}_u}\right)^{+},\\
		\rm{UIL}^{\mathtt{L}} &= 		2 \left(\sqrt{{\mathbb  S}_l} - \sqrt{P_t} \right)^{+} 
		- 2 \left(\sqrt{{\mathbb  S}_u} - \sqrt{P_t}\right)^{+}
		\\
		&- \frac{1}{\sqrt{{\mathbb  S}_l}}  	 \left( {{\mathbb  S}_l} - {P_t}   \right)^{+}
		+ \frac{1}{\sqrt{{\mathbb  S}_u}}  	 \left(   {{\mathbb  S}_u} -  {P_t}\right)^{+}. \label{eq_UIL_representation_left}
	\end{align}
\end{proposition}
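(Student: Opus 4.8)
The plan is to prove both identities by a direct region-by-region comparison against the piecewise expressions already derived in \eqref{eq_losP_for_rightside} and \eqref{eq_losP_for_leftside}. Both sides of each identity are continuous piecewise functions of the exit price $P_t$ on $(0,\infty)$ whose only breakpoints are the strikes $\mathbb{P}_l,\mathbb{P}_u$ (respectively $\mathbb{S}_l,\mathbb{S}_u$). Since two continuous functions that agree on every interval of a common partition coincide everywhere, it suffices to match the option combination with the indicator form on the three regions $\{P_t<\mathbb{P}_l\}$, $\{\mathbb{P}_l\le P_t\le\mathbb{P}_u\}$, $\{P_t>\mathbb{P}_u\}$, and symmetrically on the left.

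For $\mathrm{UIL}^{\mathtt R}$ I would evaluate each positive part of \eqref{eq_UIL_representation} region by region. On $\{P_t<\mathbb{P}_l\}$ all four call payoffs vanish, giving $0$, which matches \eqref{eq_losP_for_rightside} with both indicators off. On $\{\mathbb{P}_l\le P_t\le\mathbb{P}_u\}$ only the two $\mathbb{P}_l$-struck options are in the money, and expanding $2(\sqrt{P_t}-\sqrt{\mathbb{P}_l})-\tfrac{1}{\sqrt{\mathbb{P}_l}}(P_t-\mathbb{P}_l)$ with $\mathbb{P}_l/\sqrt{\mathbb{P}_l}=\sqrt{\mathbb{P}_l}$ reproduces the first bracket $2\sqrt{P_t}-P_t/\sqrt{\mathbb{P}_l}-\sqrt{\mathbb{P}_l}$. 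On $\{P_t>\mathbb{P}_u\}$ all four options are active; the two square-root calls cancel the $\sqrt{P_t}$ growth, the two vanilla calls combine into $-(\tfrac{1}{\sqrt{\mathbb{P}_l}}-\tfrac{1}{\sqrt{\mathbb{P}_u}})P_t$, and the constants collapse to $\sqrt{\mathbb{P}_u}-\sqrt{\mathbb{P}_l}$, matching the second bracket of \eqref{eq_losP_for_rightside}.

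The left-side identity \eqref{eq_UIL_representation_left} follows by identical bookkeeping with calls replaced by puts $(\sqrt{K}-\sqrt{P_t})^+$ and $(K-P_t)^+$ and the partition traversed in reverse, checking $\{P_t>\mathbb{S}_u\}$, $\{\mathbb{S}_l\le P_t\le\mathbb{S}_u\}$, $\{P_t<\mathbb{S}_l\}$ against \eqref{eq_losP_for_leftside}. A convenient way to organize the whole argument is the strike-crossing viewpoint: starting from $\mathrm{UIL}^{\mathtt R}=0$ below $\mathbb{P}_l$, each option switches on exactly as $P_t$ passes its strike, so the $\mathbb{P}_l$-struck pair builds the interior form and the additional $\mathbb{P}_u$-struck pair flattens the square-root term into a linear decay beyond $\mathbb{P}_u$. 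The computation is elementary and I foresee no genuine obstacle; the only step needing care is the region $\{P_t>\mathbb{P}_u\}$, where one must confirm that the coefficients $2,-2$ on the square-root calls and $-1/\sqrt{\mathbb{P}_l},+1/\sqrt{\mathbb{P}_u}$ on the vanilla calls are precisely those annihilating the $\sqrt{P_t}$ dependence and leaving the correct linear tail — a mismatch there being invisible in the interior region — after which continuity at the two breakpoints yields the global equality.
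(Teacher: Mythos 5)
Your proof is correct and follows essentially the same route as the paper, which obtains the proposition simply by ``rearranging'' the piecewise expressions \eqref{eq_losP_for_rightside} and \eqref{eq_losP_for_leftside} --- i.e.\ exactly the region-by-region algebraic verification you carry out, with all coefficient checks (in particular the cancellation of the $\sqrt{P_t}$ terms and the linear tail for $P_t>{\mathbb P}_u$) coming out as you state.
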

Proposition \ref{proposition_UIL_representation} demonstrates the unit  impermanent loss $\text{UIL}^{\mathtt{R}}$  ($\text{UIL}^{\mathtt{L}}$) is equivalent to hold  two types of  long call (put) options with price  process $P_t$ and squared root price process $\sqrt{P_t}$ and also two short call (put) positions.  The strike prices are either the lower boundary $\mathbb{P}_l$ or the upper boundary $\mathbb{P}_u$. This also means liquidity providers suffer all standard European option's risk such as vega and  gamma exposures. 

Assuming risk-free rate is zero and $P_t$ obeys geometric Brownian motion, we have the following corollary. 
\begin{corollary}
	If the price $P_t$ of token $\mathtt{T_X}$ is driven by a geometric Brownian motion with   volatility $\sigma$,  the   impermanent loss per liquidity $\text{UIL}^{\mathtt{R}}$ and $\text{UIL}^{\mathtt{L}}$ are
	\begin{align}
		{\mathbb{E}[\rm{UIL}^{\mathtt{R}}]}  &= 2\sqrt{P_0}\exp\left({-\frac{\sigma^2 t}{8}}\right) 
		\left(N(d_l +\sigma\sqrt{t}/2)-N(d_u+\sigma\sqrt{t}/2)\right) \\
		& - \sqrt{\mathbb{P}_l} N(d_l) + \sqrt{\mathbb{P}_u} N(d_u)	 \\
		&-\frac{1}{\sqrt{{\mathbb  P}_l}} P_0 N(d_l+\sigma \sqrt{t}) 
		+  \frac{1}{\sqrt{{\mathbb  P}_u}}P_0 N(d_u+\sigma \sqrt{t}), \\
		%
		{\mathbb{E}[\rm{UIL}^{\mathtt{L}}]}  &=  2\sqrt{P_0}\exp\left({-\frac{\sigma^2 t}{8}}\right) 
		\left(-N(-q_l -\sigma\sqrt{t}/2) \right.\\
		&\left. + N(-q_u-\sigma\sqrt{t}/2)\right) 
		+ \sqrt{\mathbb{S}_l} N(-q_l) - \sqrt{\mathbb{S}_u} N(-q_u)	 \\
		&+\frac{1}{\sqrt{{\mathbb  S}_l}} P_0 N(-q_l-\sigma \sqrt{t}) 
		- \frac{1}{\sqrt{{\mathbb  S}_u}}P_0 N(-q_u-\sigma \sqrt{t}).
	\end{align}
	Here, $N(\cdot)$ is the standard normal cumulative distribution function and 
	\begin{align}
		d_u &= \frac{\ln(P_0/{\mathbb  P}_u )  -  \sigma^2 t/2  }{\sigma\sqrt{t}}, \quad 
		d_l = \frac{\ln( {P_0}/{\mathbb  P}_l)  - \sigma^2 t/2}{\sigma\sqrt{t}},\\ 
		q_u &= \frac{\ln(P_0/{\mathbb  S}_u )  -  \sigma^2 t/2  }{\sigma\sqrt{t}}, \quad 
		q_l = \frac{\ln( {P_0}/{\mathbb  S}_l)  - \sigma^2 t/2}{\sigma\sqrt{t}}.
	\end{align}
\end{corollary}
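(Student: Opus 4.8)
The plan is to take the expectation of the four-term representation of $\mathrm{UIL}^{\mathtt{R}}$ from Proposition \ref{proposition_UIL_representation} term by term, using linearity of $\mathbb{E}$. Since the risk-free rate is zero and $P_t$ is a geometric Brownian motion serving as the underlying price, under the pricing measure it is a martingale, so I would write $P_t = P_0\exp(\sigma W_t - \sigma^2 t/2)$, whence $\ln P_t$ is Gaussian with mean $\ln P_0 - \sigma^2 t/2$ and variance $\sigma^2 t$ (the $-\sigma^2 t/2$ is precisely what appears in the definitions of $d_l,d_u$). The two ordinary-call terms $(P_t-\mathbb{P}_l)^+$ and $(P_t-\mathbb{P}_u)^+$ are then dispatched by the textbook undiscounted Black--Scholes formula $\mathbb{E}[(P_t-K)^+] = P_0 N(d_+) - K N(d_-)$ with $d_- = (\ln(P_0/K)-\sigma^2 t/2)/(\sigma\sqrt{t})$ and $d_+ = d_-+\sigma\sqrt{t}$; substituting $K=\mathbb{P}_l$ and $K=\mathbb{P}_u$ reproduces $d_l$ and $d_u$ verbatim.

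The only step requiring genuine care is the pair of square-root payoff terms $(\sqrt{P_t}-\sqrt{\mathbb{P}_l})^+$ and $(\sqrt{P_t}-\sqrt{\mathbb{P}_u})^+$. Here I would set $S_t := \sqrt{P_t}$ and note that $\ln S_t = \tfrac12\ln P_0 + \tfrac12\sigma W_t - \tfrac14\sigma^2 t$ is Gaussian with variance $\tfrac14\sigma^2 t$, so $S_t$ is again lognormal but with effective volatility $\sigma/2$ and a drift that is no longer martingale-normalised. Computing $\mathbb{E}[S_t] = \sqrt{P_0}\exp(-\sigma^2 t/8)$ already exposes the characteristic factor $\exp(-\sigma^2 t/8)$ of the statement. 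Applying the general lognormal identity $\mathbb{E}[(Z-k)^+] = e^{m+s^2/2}N((m-\ln k+s^2)/s) - k\,N((m-\ln k)/s)$ for $\ln Z\sim\mathcal{N}(m,s^2)$ with $m=\tfrac12\ln P_0-\tfrac14\sigma^2 t$, $s=\tfrac12\sigma\sqrt{t}$ and $k=\sqrt{K}$, I would verify that $(m-\ln k)/s$ collapses to exactly $d_l$ (resp. $d_u$) while $(m-\ln k+s^2)/s = d_l + \sigma\sqrt{t}/2$, giving $\mathbb{E}[(\sqrt{P_t}-\sqrt{K})^+] = \sqrt{P_0}\exp(-\sigma^2 t/8)\,N(d+\sigma\sqrt{t}/2) - \sqrt{K}\,N(d)$. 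Note the shift is $\sigma\sqrt{t}/2$ here versus $\sigma\sqrt{t}$ for the ordinary call, precisely because the volatility is halved.

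The final step is purely algebraic bookkeeping: collecting the coefficient of each $N(\cdot)$. The $N(d_l)$ contributions from the square-root call ($-2\sqrt{\mathbb{P}_l}$) and from the ordinary call ($+\sqrt{\mathbb{P}_l}$, arising as $-\tfrac{1}{\sqrt{\mathbb{P}_l}}\cdot(-\mathbb{P}_l)$) combine to $-\sqrt{\mathbb{P}_l}\,N(d_l)$, and symmetrically the $N(d_u)$ terms combine to $+\sqrt{\mathbb{P}_u}\,N(d_u)$, while the $\exp(-\sigma^2 t/8)$ and $P_0$ terms line up directly, reproducing the stated $\mathbb{E}[\mathrm{UIL}^{\mathtt{R}}]$. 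The formula for $\mathbb{E}[\mathrm{UIL}^{\mathtt{L}}]$ follows by the identical argument applied to \eqref{eq_UIL_representation_left} with calls replaced by puts, using $\mathbb{E}[(K-P_t)^+] = K N(-d_-) - P_0 N(-d_+)$ and the analogous square-root put identity; this produces the arguments $-q_l,-q_u$ and the sign pattern of the second displayed equation. The main obstacle is thus not conceptual but clerical, namely keeping the volatility halving, the drift correction, and the resulting shift distinction ($\sigma\sqrt{t}/2$ versus $\sigma\sqrt{t}$) consistent; once $S_t$ is identified as lognormal with volatility $\sigma/2$, the whole corollary reduces to two applications of a single Gaussian-integral lemma.
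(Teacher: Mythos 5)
Your proposal is correct and follows exactly the route the paper intends: the corollary is stated without proof immediately after Proposition \ref{proposition_UIL_representation}, and the natural derivation is precisely yours---take expectations of the four-term call/put decomposition, apply the undiscounted Black--Scholes formula to the ordinary payoffs, and treat $\sqrt{P_t}$ as a lognormal with volatility $\sigma/2$ and mean $\sqrt{P_0}\exp(-\sigma^2 t/8)$, which produces the $\sigma\sqrt{t}/2$ shift and the $\exp(-\sigma^2 t/8)$ factor. Your bookkeeping of the $N(d_l)$, $N(d_u)$, $N(-q_l)$, $N(-q_u)$ coefficients checks out against the stated formulas, so nothing is missing.
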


\begin{figure}[]
	\centering
	\caption{Impermanent Loss Sensitives}
	\includegraphics[trim = 3cm 1cm 3cm 1cm, scale=0.45]{./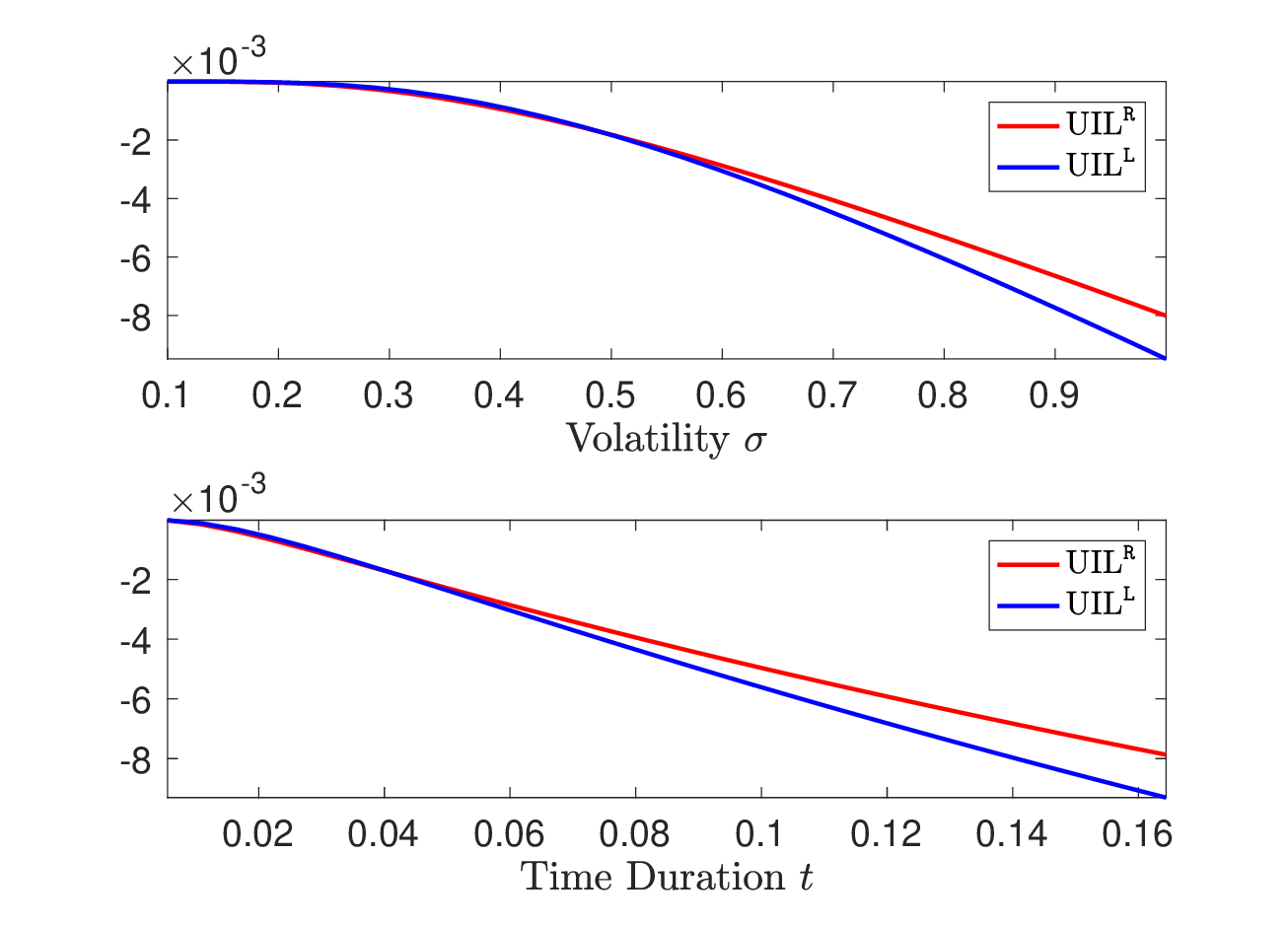}
	\label{fig_permanent_loss}
		\floatfoot{Note. Here, we set $\Delta L = 1, \ \sigma = 0.7$. The initial pool price is  $P_0=10$ and the liquidity provider supplies on the upper price interval $[{\mathbb  P}_l, {\mathbb  P}_u] = [11,12]$ and lower price interval $[{\mathbb  S}_l, {\mathbb  S}_u] = [8,9]$ and closes her position after one month, i.e. $t=30$ days.}
\end{figure}
Figure \ref{fig_permanent_loss} shows the impermanent  loss  ${\mathbb{E}[\text{UIL}^{\mathtt{R}}]} $  and ${\mathbb{E}[\text{UIL}^{\mathtt{L}}]} $ are both declining  with the increasing of volatility and exiting time $t$.  It all shows the asymmetric pattern that the right side liquidity provision  ${\mathbb{E}[\text{UIL}^{\mathtt{R}}]} $ is less sensitive to vega and theta risk.

\subsection{Static Replication of Impermanent Loss}
{  Proposition  \ref{proposition_UIL_representation} demonstrates the impermanent loss is an ``option-like'' instrument that can not be easily hedged by underlying asset and futures.  In this section, we statically replicate (or hedge) the impermanent loss by standard European call or put options. Here, static replication means the liquidity provider could buy a combination of calls or puts at inception and hold the position statically until removing her liquidity from the pool. This lowers transaction and re-balancing cost for them.} We start with two basic   equalities. 

\begin{lemma}\label{lemma_root_option_represetntation}
	The following two equations   hold.
	\begin{align}\label{eq_root_option_represetntation}
		\left(\sqrt{x} - \sqrt{\widehat{K}} \right)^+  
		&= \frac{\widehat{K}^{-\frac{1}{2}}}{2 } \left( x - \widehat{K}\right)^{+}
		-\frac{1}{4} \int_{\widehat{K}}^{\infty} K^{-\frac{3}{2}} \left(x - K\right)^+ \dd K,  \ \\
		\left(\sqrt{\widehat{K}} - \sqrt{x}   \right)^+  
		&= \frac{\widehat{K}^{-\frac{1}{2}} }{2 }\left( \widehat{K} - x  \right)^{+}
		+ \frac{1}{4} \int_{0}^{\widehat{K}} K^{-\frac{3}{2}} \left(K - x \right)^+ \dd K.
		\label{eq_root_option_represetntation_put}
	\end{align}
\end{lemma}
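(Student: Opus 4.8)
The plan is to recognize both identities as instances of the Carr--Madan static spanning formula, which decomposes a payoff into a forward position plus a continuum of out-of-the-money calls and puts weighted by the payoff's second derivative. For the first identity I would set $f(x) = \bigl(\sqrt{x} - \sqrt{\widehat{K}}\bigr)^{+}$ and first observe that, since $x\mapsto\sqrt{x}$ is increasing on $[0,\infty)$, the event $\{x\geq \widehat{K}\}$ coincides with $\{\sqrt{x}\geq \sqrt{\widehat{K}}\}$; hence the left-hand side of \eqref{eq_root_option_represetntation} is exactly this positive part. On $(\widehat{K},\infty)$ the map $f$ is smooth with $f'(x)=\tfrac12 x^{-1/2}$ and $f''(x)=-\tfrac14 x^{-3/2}$, while $f\equiv 0$ on $[0,\widehat{K})$; $f$ is continuous at $\widehat{K}$ but its first derivative jumps from $0$ to $\tfrac12\widehat{K}^{-1/2}$ there.

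Next I would apply Taylor's theorem with integral remainder expanded about the kink $\widehat{K}$: for $x\geq\widehat{K}$, $f(x)=f(\widehat{K})+f'(\widehat{K}^{+})(x-\widehat{K})+\int_{\widehat{K}}^{x} f''(K)(x-K)\,\dd K$. Substituting $f(\widehat{K})=0$, $f'(\widehat{K}^{+})=\tfrac12\widehat{K}^{-1/2}$ and $f''(K)=-\tfrac14 K^{-3/2}$, and extending the upper limit to $+\infty$ by replacing $(x-K)$ with $(x-K)^{+}$ (legitimate because $(x-K)^{+}$ vanishes for $K>x$), reproduces the right-hand side term by term. The jump in $f'$ at $\widehat{K}$ is precisely what generates the isolated linear term $\tfrac12\widehat{K}^{-1/2}(x-\widehat{K})^{+}$ --- morally a Dirac mass in $f''$ concentrated at the strike. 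For $x<\widehat{K}$ the left-hand side is zero and every term on the right also vanishes, so the identity holds on all of $[0,\infty)$. A referee-friendly alternative I would also supply is a purely direct check: evaluate $\int_{\widehat{K}}^{\infty}K^{-3/2}(x-K)^{+}\,\dd K$ in closed form via $\int K^{-3/2}\dd K=-2K^{-1/2}$ and $\int K^{-1/2}\dd K=2K^{1/2}$, split into the regimes $x\geq\widehat{K}$ and $x<\widehat{K}$, and confirm equality.

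The second identity \eqref{eq_root_option_represetntation_put} I would obtain in the same way with $g(x)=\bigl(\sqrt{\widehat{K}}-\sqrt{x}\bigr)^{+}$, now expanding to the left of $\widehat{K}$ and integrating $g''(K)=-\tfrac14 K^{-3/2}$ against put payoffs $(K-x)^{+}$ over $(0,\widehat{K})$; the $+\tfrac14$ sign in front of the integral (versus $-\tfrac14$ in the call case) is just the consequence of the put convention $(K-x)^{+}$. The only genuine points requiring care --- and hence what I would verify explicitly --- are (i) the non-differentiability at $\widehat{K}$, which must be tracked through the one-sided derivative so that the forward term carries the correct coefficient $\tfrac12\widehat{K}^{-1/2}$, and (ii) integrability: for the call the effective range of the integral collapses to the bounded interval $[\widehat{K},x]$ on which $K^{-3/2}$ is bounded, so convergence is automatic, while for the put the integrand behaves like $K^{-1/2}$ near $K=0$ (even when $x=0$) and is therefore integrable. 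Neither is a real obstacle, so once the kink is handled the whole argument is elementary.
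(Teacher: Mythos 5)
Your overall method is the same as the paper's: the authors also invoke the spanning representation of Carr and Madan, applied with expansion point $x^\star = \widehat{K}$. The only mechanical difference is that they expand the globally smooth function $f(x)=\sqrt{x}-\sqrt{\widehat{K}}$ on all of $(0,\infty)$ (so both a put and a call integral appear) and then multiply the resulting identity by $\ind_{\{x\geq \widehat{K}\}}$, which annihilates the put integral; you instead expand the kinked payoff $\bigl(\sqrt{x}-\sqrt{\widehat{K}}\bigr)^{+}$ regime by regime via Taylor's theorem with integral remainder at the kink. Your treatment of the call identity \eqref{eq_root_option_represetntation} is correct and complete, including the one-sided derivative at $\widehat{K}$, the extension of the remainder integral to $(x-K)^{+}$ on $[\widehat{K},\infty)$, and the integrability remarks; your closed-form direct check also goes through.

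However, your put case as written contains two sign errors that happen to cancel. On $(0,\widehat{K})$ the function $g(x)=\sqrt{\widehat{K}}-\sqrt{x}$ is \emph{convex}, with $g''(x)=+\tfrac14 x^{-3/2}$, not $-\tfrac14 x^{-3/2}$ as you state. Moreover, your explanation that the final $+\tfrac14$ ``is just the consequence of the put convention $(K-x)^{+}$'' is not correct either: when you expand to the left of $\widehat{K}$, the integral remainder satisfies, for $x<\widehat{K}$,
\begin{align}
	\int_{\widehat{K}}^{x} g''(K)\,(x-K)\, \dd K \;=\; \int_{x}^{\widehat{K}} g''(K)\,(K-x)\, \dd K,
\end{align}
i.e.\ the two sign flips (reversing the orientation of the interval and replacing $x-K$ by $K-x$) cancel, so rewriting the remainder against the put payoff $(K-x)^{+}$ introduces no sign change at all. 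The $+\tfrac14$ in \eqref{eq_root_option_represetntation_put} comes solely from $g''>0$, which in turn reflects that a put payoff must carry positive weights as a convex function. As literally described, your recipe (``integrate $-\tfrac14 K^{-3/2}$ against put payoffs over $(0,\widehat{K})$'') would yield $-\tfrac14\int_{0}^{\widehat{K}}K^{-3/2}(K-x)^{+}\dd K$, contradicting the identity you are proving. The fix is only to recompute $g''$; with that corrected, your argument matches the paper's and is sound.
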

\begin{proof}
	Any twice differentiable function $f$ can be represented as (see \citet{carr2001towards})
	\begin{align}\label{eq_well_f_rep}
		f(x) &= f(x^\star) +f^{'}(x^\star) (x - x^\star) \\
		&+\int_0^{x^\star} f^{''}(K)(K-x)^{+} \dd K + \int_{x^\star}^{+\infty} f^{''}(K)(x - K)^{+} \dd K.
	\end{align}
	Applying formula \eqref{eq_well_f_rep} to $f(x) = \sqrt{x} - \sqrt{\widehat{K}}$ and choose $x^\star = \widehat{K}$, we have 
	\begin{align}
		\sqrt{x} - \sqrt{\widehat{K}} 
		&= \frac{1}{2 }\widehat{K}^{-\frac{1}{2}} \left( x - \widehat{K}\right) 
		- \frac{1}{4} \int_{0}^{\widehat{K}} K^{-\frac{3}{2}} \left(K - x\right)^+ \dd K \\
		&-\frac{1}{4} \int_{\widehat{K}}^{+\infty} K^{-\frac{3}{2}} \left(x - K\right)^+ \dd K.
	\end{align}
	Multiplying  $\ind_{\{x\geq \widehat{K}\}}$ in the above equation, it yields \eqref{eq_root_option_represetntation}. Similar argument leads to \eqref{eq_root_option_represetntation_put}. This completes the proof. 
\end{proof}

\begin{proposition}\label{prop_mean_UIL_representation}
	The impermanent loss per liquidity (UIL) can be statically  replicated by 
	\begin{align}\label{eq_mean_UIL_representation}
		\mathbb{E}[{\rm{UIL}}^{\mathtt{R}}] & =  
		- \frac{1}{2} \int_{{\mathbb  P}_l}^{{\mathbb  P}_u} K^{-\frac{3}{2}}  {\mathbf{C}}(K) \dd K,  \\ 
		\mathbb{E}[{\rm{UIL}}^{\mathtt{L}}]  &=  
		- \frac{1}{2} \int_{{\mathbb  S}_l}^{{\mathbb  S}_u} K^{-\frac{3}{2}}  \mathbf{P}(K) \dd K.\label{eq_mean_UIL_representation_left}
	\end{align}
	Here, $\mathbf{C}(K)$ and $\mathbf{P}(K)$ are European call and put option prices with maturity $t$ and strike price $K$. \\
	
	Especially, when the provider supplies liquidity over $(0,\infty)$ as in Uniswap v2, the total UIL is 
	$ 	- \frac{1}{2} \int_{P_0}^{\infty} K^{-\frac{3}{2}}  {\mathbf{C}}(K) \dd K
		- \frac{1}{2} \int_{0}^{P_0} K^{-\frac{3}{2}}  \mathbf{P}(K) \dd K$.
\end{proposition}

\begin{proof}
	Using equations \eqref{eq_root_option_represetntation} and \eqref{eq_UIL_representation} of Lemma \ref{lemma_root_option_represetntation} and  Proposition \ref{proposition_UIL_representation}    
	with $x = P_t$ and $\widehat{K} \in  \{{\mathbb  P}_l, {\mathbb  P}_u\}$, we have 
	\begin{align} 
		\text{UIL}^{\mathtt{R}} &= 		2 \left(\sqrt{P_t} - \sqrt{{\mathbb  P}_l}\right)^{+} 
		- 2 \left(\sqrt{P_t} - \sqrt{{\mathbb  P}_u}\right)^{+}
		\\
		&- \frac{1}{\sqrt{{\mathbb  P}_l}}  	 \left( {P_t} -  {{\mathbb  P}_l}\right)^{+}
		+ \frac{1}{\sqrt{{\mathbb  P}_u}}  	 \left( {P_t} -  {{\mathbb  P}_u}\right)^{+}\\
		& =  {\mathbb  P}_l^{-\frac{1}{2}} \left( P_t -{\mathbb  P}_l\right)^{+}
		-\frac{1}{2} \int_{{\mathbb  P}_l}^{+\infty} K^{-\frac{3}{2}} \left(P_t- K\right)^+ \dd K 
		\\
		&- {\mathbb  P}_u^{-\frac{1}{2}} \left( P_t -{\mathbb  P}_u \right)^{+}
		+\frac{1}{2} \int_{{\mathbb  P}_u}^{+\infty} K^{-\frac{3}{2}} \left(P_t - K\right)^+ \dd K\\
		&- {\mathbb  P}_l^{-\frac{1}{2}} \left( P_t -{\mathbb  P}_l\right)^{+}
		+ {\mathbb  P}_u^{-\frac{1}{2}} \left( P_t -{\mathbb  P}_u \right)^{+} \\
		& = -\frac{1}{2} \int_{{\mathbb  P}_l}^{{\mathbb  P}_u} K^{-\frac{3}{2}} \left(P_t- K\right)^+ \dd K.
	\end{align}
	Taking expectation under risk-neutral probability gives the first equation in   \eqref{eq_mean_UIL_representation}. Similar argument with  \eqref{eq_UIL_representation_left} and  \eqref{eq_root_option_represetntation_put}    gives the second equality in \eqref{eq_mean_UIL_representation_left}. When the provider supplies liquidity over $(0,\infty)$ as in Uniswap v2, the supported price intervals are  $[{\mathbb  P}_l, {\mathbb  P}_u] = [P_0, +\infty]$ and $[{\mathbb  S}_l, {\mathbb  S}_u] = [0, P_0]$. This completes the proof. 
\end{proof}

First of all,   Proposition \ref{prop_mean_UIL_representation} shows  the impermanent loss could be perfectly replicated via a group of call or put options with strike prices supported in the liquidity provision interval. This equips liquidity providers the vehicle to hedge the IL by trading options in the more liquid centralised cyptocurrency  market such as Deribit.  Second, the impermanent losses  inherit all option Greeks such as delta, gamma and vega risk factors. For instance, the delta $\partial \mathbb{E}[{\rm{UIL}}^{\mathtt{R}}] /\partial P  =  
- \frac{1}{2} \int_{{\mathbb  P}_l}^{{\mathbb  P}_u} K^{-\frac{3}{2}}  \Delta_{\mathbf{C}}(K) \dd K$ and  $\Delta_{\mathbf{C}}(K)$ is the call option's delta at strike price $K$. {  Third, in practice,  only limited option strikes are traded in the market.} Therefore, we use a discrete   version of  formulae \eqref{eq_mean_UIL_representation} and \eqref{eq_mean_UIL_representation_left}, i.e. 
$
\mathbb{E}[{\rm{UIL}}^{\mathtt{R}}] \simeq -0.5   \sum_i K_i^{-3/2} \mathbf{C}(K_i) \Delta K_i
$
and 
$
\mathbb{E}[{\rm{UIL}}^{\mathtt{L}}] \simeq -0.5   \sum_i M_i^{-3/2} \mathbf{P}(M_i) \Delta M_i.
$
Here, $(K_i)$ and $(M_i)$ are the partitions of intervals $[{\mathbb  P}_u, \ {\mathbb  P}_l]$ and $[{\mathbb  S}_u, \ {\mathbb  S}_l]$ accordingly. {The approximation would induce discretization errors for the replications which is evaluated in the following section.}

\section{Numerical Analysis}\label{sec_empirical}
\subsection{ Discretization Errors}
In this section, we assume the pool price   is driven by the Heston process 
\begin{align}
	\dd P_t & = \mu P_t \dd t + \sqrt{\nu_t} P_t \dd W^P_t, \quad 
	\dd \nu_t  = \kappa (\theta - \nu_t) \dd t  + \xi \sqrt{\nu_t} \dd W^\nu_t.
\end{align}
Here, $W^P$ and $W^\nu$ are two correlated Brownian motions with correlation $\rho$.  In the numerical analysis, we assume the initial pool price $P_0 = 10$ and volatility $\nu_0 = 0.3$ and set $\mu = 0.1 $ and $\rho = -0.3$. 
The mean-reversion speed, volatility level and volatility-of-volatility are set as  $\kappa = 0.4, \theta = 0.4$ and $\xi = 0.15$ accordingly. 
The liquidity provider supplies liquidity in  the upper price interval  $[{\mathbb  P}_l, {\mathbb  P}_u] = [11,14]$ and lower price interval $[{\mathbb  S}_l, {\mathbb  S}_u] = [6,9]$ and closes her position after one week, i.e. $t=7$ days.

First, we use Monte Carlo method to estimate  $\mathbb{E}[{\rm{UIL}}^{\mathtt{R}}], \ \mathbb{E}[{\rm{UIL}}^{\mathtt{L}}]$ and call and put   prices. {Second,  the simple trapezoidal numerical integration is utilized to approximate the integrals in \eqref{eq_mean_UIL_representation}  and \eqref{eq_mean_UIL_representation_left} with  100 different strikes   in the   intervals $[{\mathbb  P}_l, {\mathbb  P}_u]$ and  $[{\mathbb  S}_l, {\mathbb  S}_u]$}.       Table \ref{tab_accuracy} reports the replication error ratios. It shows in all scenarios the replication formulas  yield highly accurate approximations for both right and left side impermanent losses if there are enough traded option strikes. The error ratios are   roughly 0.1 base point (0.01\%) for right side impermanent losses and 0.01bp for the left one.    With the increasing of volatility level $\theta$, it also increases   impermanent losses $ \mathbb{E}[{\rm{UIL}}^{\mathtt{R}}]$ and  $\mathbb{E}[{\rm{UIL}}^{\mathtt{L}}]$  simultaneously. The effects of reversion speed $\kappa$ and volatility-of-volatility $\xi$ are mixed. 

\begin{table}[h!]
	\caption{Static Replication Accuracy for Impermanent Loss per Liquidity}
	\label{tab_accuracy}
\begin{tabular}{cccc}
	\toprule
	& $\mathbb{E} [\text{UIL}^{\mathtt{R}}]$ & $\text{UIL}^{\mathtt{R}}$ Replication & Error Ratio \\
	\toprule
	$\kappa= 0.3$ & -0.424263                              & -0.424267                             & 1.03E-05    \\
	$\kappa= 0.4$ & -0.420756                              & -0.420761                             & 1.03E-05    \\
	$\kappa= 0.5$ & -0.439243                              & -0.439248                             & 1.02E-05    \\ \\
	$\theta= 0.3$ & -0.411396                              & -0.411401                             & 1.08E-05    \\
	$\theta= 0.4$ & -0.460699                              & -0.460703                             & 1.01E-05    \\
	$\theta= 0.5$ & -0.501689                              & -0.501694                             & 9.68E-06    \\ \\
	$\xi= 0.1$    & -0.440122                              & -0.440126                             & 1.02E-05    \\
	$\xi= 0.15$   & -0.497540                              & -0.497545                             & 9.97E-06    \\
	$\xi= 0.2$    & -0.467309                              & -0.467313                             & 1.02E-05    \\
	\toprule
	& $\mathbb{E} [\text{UIL}^{\mathtt{L}}]$ & $\text{UIL}^{\mathtt{L}}$ Replication & Error Ratio \\
	$\kappa= 0.3$ & -0.177913                              & -0.177913                             & 1.59E-06    \\
	$\kappa= 0.4$ & -0.186653                              & -0.186652                             & 1.83E-06    \\
	$\kappa= 0.5$ & -0.192570                              & -0.192570                             & 1.41E-06    \\ \\
	$\theta= 0.3$ & -0.150062                              & -0.150061                             & 1.91E-06    \\
	$\theta= 0.4$ & -0.185478                              & -0.185477                             & 1.58E-06    \\
	$\theta= 0.5$ & -0.217112                              & -0.217112                             & 7.72E-07    \\ \\
	$\xi= 0.1$    & -0.187169                              & -0.187168                             & 1.72E-06    \\
	$\xi= 0.15$   & -0.182929                              & -0.182929                             & 1.36E-06    \\
	$\xi= 0.2$    & -0.184267                              & -0.184267                             & 1.18E-06   \\
		\toprule
	\end{tabular}
	\floatfoot{Note. The initial pool price and volatility are  $P_0 = 10$ and  $\nu_0 = 0.3$. Choose   $\mu = 0.1 $ and $\rho = -0.3$.   The mean-reversion speed, volatility level and volatility-of-volatility are set as  $\kappa = 0.4, \theta = 0.4$ and $\xi = 0.15$. The liquidity provider supplies liquidity in  the upper price interval  $[{\mathbb  P}_l, {\mathbb  P}_u] = [11,14]$ and lower price interval $[{\mathbb  S}_l, {\mathbb  S}_u] = [6,9]$ and closes her position after one month, i.e. $t=7$ days. }
\end{table}

\subsection{Replication Error on Deribit  Option Market}
In this section, we verify the replication accuracy using    bitcoin options traded on Deribit exchange from 1, Jan. 2020 to 31, Dec. 2020.
We access the tick-by-tick options data through Deribit API (application programming interface) which consists of 1,316,050 trades with the annual total volume of  56.7 billion USD. 

On each day, we suppose the liquidity provider could randomly enter the Uniswap market and provide liquidity for the BTC-USDC pool from the right and left sides of current price, i.e., $ [P_0, u\cdot P_0] $ and $ [d\cdot P_0, P_0] $ and  deplete liquidity until time $ T $ (1 or 2 weeks).
Here, $ P_0 $ is the entry bitcoin price and two constants $ u>1 $ and $ d<1 $ control the liquidity provision price intervals.  The impermanent loss when exiting  is calculated through equations 
\eqref{eq_UIL_representation} and \eqref{eq_UIL_representation_left}.

In the meanwhile, the liquidity provider also longs a combination of calls (or puts) with strikes in $ [P_0, u\cdot P_0] $ (or $ [d\cdot P_0, P_0] $) and holds statically to maturity $ T $. The call and put options gain are calculated via the means of 
\begin{align}
	   &-0.5   \sum_i K_i^{-3/2}(P_T - K_i)^+ \Delta K_i, \quad \mbox{and}, \\
	   &
	  -0.5   \sum_i M_i^{-3/2}  (M_i - P_T)^+ \Delta M_i.
\end{align}
Here, $ P_T $ is the bitcoin price at time $T$ and $ K_i, M_i $ are all traded option strikes.

Table \ref{tab_deribit_replication} shows several interesting facts: (1) the impermanent losses $\rm{UIL}^{\mathtt{R}}$ and $\rm{UIL}^{\mathtt{L}}$ both increase with more wider liquidity provision intervals (larger $ u $ and smaller $ d $);
(2) the right side loss  $\rm{UIL}^{\mathtt{R}}$ is more severe than left side loss $\rm{UIL}^{\mathtt{L}}$. The intuition is  the bitcoin price soars from 4,000 to nearly 30,000 in   2020 and right side liquidity provision would be much more risky; (3) even-though only a few   strikes (3, 7, 10) are traded, the replication errors are reasonable, especially for this particular volatile markets; (4) we do not observe obvious patterns when the liquidity provision duration is longer.

\begin{table}[h!]
	\caption{Static Replication using Deribit Option  }
	\label{tab_deribit_replication}
	\begin{tabular}{ccccc}
		\toprule
		T(week)            & Price Interval & $\mathbb{E} [\text{UIL}^{\mathtt{R}}]$ & Static & \#Strikes \\
		\toprule
		\multirow{3}{*}{1} & $u=1.1$ &  -1.588    &   -0.991 & 3 \\
		  & $u=1.2$        & -2.735                  & -2.294 & 7        \\
		& $u=1.3$        & -3.597                  & -3.253 & 10        \\
		\\
		\multirow{3}{*}{2} & $u=1.1$ &  -1.586    &   -0.921 & 3 \\
	   & $u=1.2$        & -2.726                  & -2.216 & 7       \\
		& $u=1.3$        & -3.569                  & -3.157 & 10       \\
		\toprule
		&                & $\mathbb{E} [\text{UIL}^{\mathtt{L}}]$ & Static & \#Strikes \\
		\multirow{3}{*}{1} & $d=0.9$ & -0.097 &  -0.066 & 3 \\
		& $d=0.8$        & -0.154                  & -0.122 & 7        \\
		& $d=0.7$        & -0.182                  & -0.149 & 10        \\
		\\
		\multirow{3}{*}{2} & $d=0.9$ & -0.101 &  -0.068 & 3 \\
		  & $d=0.8$        & -0.158                  & -0.124 & 7       \\
		& $d=0.7$        & -0.187                  & -0.154 & 10 \\
		\toprule
	\end{tabular}
	\floatfoot{Note.  The liquidity provider provides liquidity for the BTC-USDC pool from the right and left sides of current price, i.e., $ [P_0, u\cdot P_0] $ and $ [d\cdot P_0, P_0] $ and  deplete liquidity until time $ T $ (1 or 2 weeks).
		The two constants $ u>1 $ and $ d<1 $ control the liquidity provision price intervals. In the meanwhile, the liquidity provider also trades Deribit bitcoin options via the static replication formulae \eqref{eq_mean_UIL_representation} and \eqref{eq_mean_UIL_representation_left}, reported in column ``static''. The column ``\#Strikes'' is the average number of traded option strikes in the provision price intervals. }
\end{table}

\section{Conclusion}\label{sec_conclusion}
Liquidity providers supply crypto tokens  from the right and left sides of the current price in decentralised markets such as Uniswap  where they are exposed to impermanent loss. 
We analytically characterize  the option-like payoff structures  of impermanent losses for concentrated liquidity provision and propose two static replication formulas  for    the  impermanent loss   by a combination of European calls or puts with strike prices supported on the liquidity provision price interval.   Liquidity providers could  hedge their permanent loss by trading options in more liquid centralised   exchanges such as Deribit.    The Heston stochastic  diffusion model   illustrates the extreme   accuracy of replication formulas   when there are enough traded option strikes.  Further evidences from Deribit bitcoin option market confirm the usefulness and accuracy of the static replication formulae.

\newpage 
\noindent {\bf References} \\
\bibliographystyle{apalike}
\bibliography{difi}

\end{document}